\newtheorem{theorem}{Theorem}[section]
\newtheorem{proposition}[theorem]{Proposition}
\theoremstyle{definition}
\newtheorem{definition}[theorem]{Definition}
\newtheorem{remark}[theorem]{Remark}
\title{Scale-Dependent Suppression Functions and Functional Space Geometry in Renormalization}
\author{Daniel Ketels}
\date{\today}
\begin{document}

\maketitle

\begin{abstract}
We analyze the effects of a scale-dependent suppression function $\Omega(k, \Lambda)$ on the functional space geometry in renormalization theory. By introducing a dynamical cutoff scale $\Lambda$, the suppression function smoothly regulates high-momentum contributions without requiring a hard cutoff. We show that $\Omega(k, \Lambda)$ induces a modified metric on functional space, leading to a non-trivial Ricci curvature that becomes increasingly negative in the ultraviolet (UV) limit. This effect dynamically suppresses high-energy states, yielding a controlled deformation of the functional domain. Furthermore, we derive the renormalization group (RG) flow of $\Omega(k, \Lambda)$ and demonstrate its role in controlling the curvature flow of the functional space. The suppression function leads to spectral modifications that suggest an effective dimensional reduction at high energies, a feature relevant to functional space deformations and integral convergence in renormalization theory. Our findings provide a mathematical framework for studying regularization techniques and their role in the UV behavior of function spaces.
\end{abstract}

\section{Introduction}

By an \textbf{optimized suppression function} we mean a family of smooth maps
\[
  \Omega(\cdot,\Lambda) \,\colon\, D \,\to\, (0,1],
  \quad
  \text{with }
  \,\,
  \Omega(k,\Lambda)
  \,:=\,
  \frac{1}{\,1 \,+\, \bigl(\tfrac{k^2}{\Lambda^2}\bigr)^\beta\,}
  \,-\,
  \frac{\,\eta\,\exp\!\bigl[-\,k^2/\Lambda^2\bigr]}{\epsilon\!\bigl(k^2\bigr)},
\]
where:
\begin{itemize}
\item \(D = \mathbb{R}^{1,3}\) (or more generally \(\mathbb{R}^d\)) is the momentum‐space domain,
\item \(\Lambda>0\) is a dynamical cutoff scale,
\item \(\beta,\,\eta>0\) control power‐law vs.\ exponential suppression,
\item \(\epsilon\in C^{\infty}\bigl(D, \, (0,1]\bigr)\) ensures a smooth, Lorentz‐invariant transition without hard cutoffs.
\end{itemize}
Hence \(\Omega(k,\Lambda)\in(0,1]\), is smooth, and suppresses high‐momentum modes without imposing a sharp momentum cutoff.

\subsubsection*{Definition of \(\beta\) (Using the Gamma Function)}
In \(d\) dimensions, a factor \(k^{-2\beta}\) converges at large \(\|k\|\) if \(2\beta>\tfrac{d}{2}\).  
Concretely, 
\[
  \int_{\Lambda}^{\infty} 
    k^{\tfrac{d}{2}-1}\,k^{-2\beta}\,dk
  \,\text{converges for}\,
  \tfrac{d}{2}-1 -2\beta < -1
  \,\Longleftrightarrow\,
  2\beta>\tfrac{d}{2}.
\]
Equivalently, inserting a factor \(e^{-\Lambda\,k}\) and evaluating via \(\Gamma\)-function yields the same condition.  Thus:
\[
  \beta
  \,:=\,
  \inf\Bigl\{
    \gamma>0
    \,\Bigm|\,
    \Gamma\!\bigl(\tfrac{d}{2}-2\gamma\bigr)\text{ finite}
  \Bigr\},
  \quad
  \text{ensuring }2\beta>\tfrac{d}{2}\text{ for UV damping}.
\]

\subsubsection*{Definition of \(\eta\) (Additional Exponential Damping)}
If \(\eta>0\), then 
\[
  \int_{0}^{\infty} 
    k^{\tfrac{d}{2}-1}\,e^{-\eta\,k^2}\,dk
  \,=\,
  \tfrac12\,\eta^{-\tfrac{d}{2}}\,
  \Gamma\!\bigl(\tfrac{d}{2}\bigr)\,<\,\infty,
\]
so exponential decay can overcome strong divergences.  We pick \(\eta>0\) as a positive parameter for extra damping strength in \(\Omega\).

\subsubsection*{Definition of \(\epsilon(k^2)\)}
The function \(\epsilon:\,D\to(0,1]\) manages the IR--UV transition smoothly.  A canonical choice:
\[
  \epsilon\!\bigl(k^2\bigr)
  \,=\,
  \frac{1}{1 + \bigl(\tfrac{k^2}{k_c^2}\bigr)^{\alpha}},
  \quad
  \alpha>0,
\]
so that \(\epsilon(k^2)\approx1\) for \(k^2\ll k_c^2\) and \(\approx0\) for \(k^2\gg k_c^2\).  
No sharp cutoff arises, ensuring Lorentz invariance and continuity across momentum scales.

\begin{proposition}[Positivity and Boundedness of \(\Omega\)]
\label{prop:OmegaPositive}
Assume there exist constants \(\eta>0\) and a smooth function
\(\epsilon:\,\mathbb{R}_{\ge0}\to(0,1]\) such that for all \(\Lambda>0\)
and all \(k\in\mathbb{R}^d\),
\begin{equation}
  \eta\,e^{-\,\tfrac{k^2}{\Lambda^2}}
  \,\le\,
  \frac{\epsilon\!\bigl(k^2\bigr)}%
       {1 + \bigl(\tfrac{k^2}{\Lambda^2}\bigr)^\beta}
  \quad.
\end{equation}
Then the scale-dependent suppression function
\[
  \Omega(k,\Lambda)
  \,=\,
  \frac{1}{\,1 + \bigl(\tfrac{k^2}{\Lambda^2}\bigr)^\beta}
  \,-\,
  \frac{\,\eta\,e^{-\,\tfrac{k^2}{\Lambda^2}}}{\epsilon\!\bigl(k^2\bigr)}
\]
satisfies
\[
  0 \,\le\,\Omega(k,\Lambda)\,\le\, 1
  \quad
  \text{for all }k\in\mathbb{R}^d,\,\Lambda>0.
\]
In particular, \(\Omega(\cdot,\Lambda)\) remains in the interval \((0,1]\) and
never becomes negative.
\end{proposition}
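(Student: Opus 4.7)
The plan is to verify the two bounds $0 \le \Omega \le 1$ separately by reading off each one directly from the hypothesis and from elementary positivity considerations. The proof is essentially immediate once the inequality in the assumption is unpacked, so the main work is really just recognizing that the hypothesis was chosen to make the subtraction in $\Omega$ well-controlled.

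First, for the lower bound $\Omega(k,\Lambda)\ge 0$, I would start from the assumed inequality
\[
  \eta\,e^{-k^2/\Lambda^2}
  \,\le\,
  \frac{\epsilon(k^2)}{1+(k^2/\Lambda^2)^\beta},
\]
and divide both sides by $\epsilon(k^2)$, which is legitimate since $\epsilon(k^2)>0$ by definition. This gives
\[
  \frac{\eta\,e^{-k^2/\Lambda^2}}{\epsilon(k^2)}
  \,\le\,
  \frac{1}{1+(k^2/\Lambda^2)^\beta},
\]
which is precisely the statement that the subtracted term in $\Omega(k,\Lambda)$ is dominated by the leading term, so $\Omega(k,\Lambda)\ge 0$.

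For the upper bound, I would observe that $(k^2/\Lambda^2)^\beta\ge 0$ implies $1+(k^2/\Lambda^2)^\beta\ge 1$, hence $\tfrac{1}{1+(k^2/\Lambda^2)^\beta}\le 1$. Since $\eta>0$, $\epsilon(k^2)>0$ and $e^{-k^2/\Lambda^2}>0$, the subtracted quantity $\eta e^{-k^2/\Lambda^2}/\epsilon(k^2)$ is strictly positive, and subtracting a non-negative number from a quantity $\le 1$ yields something $\le 1$. Combining the two bounds gives $0\le\Omega(k,\Lambda)\le 1$ as claimed.

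There is no real obstacle in the argument; the only delicate point is the mild mismatch between the displayed conclusion $0\le\Omega\le 1$ and the textual assertion that $\Omega\in(0,1]$. Strict positivity does \emph{not} follow from the non-strict hypothesis alone — equality in the assumed inequality would give $\Omega=0$ at that momentum. If strict positivity is desired, I would note at the end that imposing the hypothesis with strict inequality (which is automatically satisfied for the canonical choices of $\epsilon$ and $\eta$ made in the introduction, since the exponential decay of the subtracted term is dominated in a strict sense by the polynomial term whenever $\eta<\epsilon(k^2)\bigl[1+(k^2/\Lambda^2)^\beta\bigr]^{-1}\cdot e^{k^2/\Lambda^2}$ holds with strict inequality) upgrades the conclusion to $\Omega(k,\Lambda)\in(0,1]$.
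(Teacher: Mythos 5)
Your proof is correct and follows essentially the same route as the paper's: rearranging the hypothesis (by dividing by \(\epsilon(k^2)>0\)) to show the subtracted term is dominated by the leading term, then bounding the leading term by \(1\). In fact your version is the clean, non-circular rendering of the argument the paper gestures at, and your closing observation is a genuine catch: strict positivity \(\Omega>0\) does not follow from the non-strict hypothesis, so the paper's ``in particular \(\Omega\in(0,1]\)'' overstates what is proved (and, since the subtracted term is strictly positive, the value \(1\) is in fact never attained either).
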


\begin{proof}
By definition $\Omega(k,\Lambda)\in (0,1]$, and by using the definition:
\[
  \Omega(k,\Lambda)
  \,=\,
  \frac{1}{\,1 + (k^2/\Lambda^2)^\beta}
  \,-\,
  \frac{\,\eta\,e^{-\,k^2/\Lambda^2}}{\epsilon(k^2)}.
\]
The condition rearranges to
\[
  0<\frac{\eta\,e^{-\,k^2/\Lambda^2}}{\epsilon(k^2)}
  \,\le\,
  \frac{1}{\,1 + (k^2/\Lambda^2)^\beta}.
\]
Which follows dirrectly from $\Omega(k,\Lambda)>0$ and both terms are strictly decreasing for increasing $\Lambda>0$. For \(\|k\|\ll\Lambda\), we get
\(\tfrac{1}{\,1+(k^2/\Lambda^2)^\beta}<1\) and  \(\eta\,e^{-\tfrac{k^2}{\Lambda^2}}< \eta\), so as \(\epsilon(\cdot)\le1\),
the expression does not exceed 1 (or can be bounded by a constant factor).
Hence \(0<\Omega(k,\Lambda)\leq1\). 

\end{proof}

\begin{remark}[Analytic Properties of \(\Omega\)]
By design, \(\Omega(k,\Lambda)\) is:
\begin{itemize}
\item strictly decreasing in \(\|k\|\) for each \(\Lambda\), preserving IR modes (\(\Omega\approx1\) at small \(\|k\|\)) and suppressing UV modes,
\item bounded in \((0,1]\), avoiding both negativity and blow‐ups,
\item smoothly differentiable both $k$ and \(\Lambda\), introducing no discontinuities.
\end{itemize}
\end{remark}

\section{Momentum-Space Integral Convergence}

Before turning to functional integrals, let us see how \(\Omega\) regulates typical momentum‐space integrals in quantum field theory (QFT).  A general loop integral looks like
\[
  I
  \,=\,
  \int_{\mathbb{R}^d}
    f(k)\,d^dk,
\]
which might diverge for large \(\|k\|\).  Replacing \(d^dk\) by \(\Omega(k,\Lambda)\,d^dk\) yields
\[
  I_{\Omega}
  \,=\,
  \int_{\mathbb{R}^d}
    \Omega(k,\Lambda)\,f(k)\,d^dk,
\]
and we want to show \(I_{\Omega}\) is finite under certain conditions on \(\beta\), \(\eta\), etc.

\subsubsection*{Proof of Convergence in Different Regimes}

\begin{theorem}[Convergence of Suppressed Integrals]
\label{thm:ConvergenceSuppressed}
Let \(\Omega(k,\Lambda)\) be the function in defined above and suppose \(f(k)\sim k^\alpha\) as \(\|k\|\to\infty\).  Then:
\begin{itemize}
\item If \(2\beta> d+\alpha,\) the integral \(\int \Omega\,f(k)\,d^dk\) converges by power‐law suppression.
\item If \(\eta>0\) and \(\epsilon\) does not vanish too fast, exponential damping ensures convergence regardless of \(\alpha\).
\end{itemize}
\end{theorem}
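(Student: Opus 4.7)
My plan is to reduce everything to a standard one-dimensional tail estimate in spherical coordinates, leveraging the pointwise bound on $\Omega$ guaranteed by Proposition~\ref{prop:OmegaPositive}. Because the subtracted term $\eta e^{-k^2/\Lambda^2}/\epsilon(k^2)$ is non-negative, I immediately obtain the envelope
\[
  0\,\le\,\Omega(k,\Lambda)\,\le\,\frac{1}{1+(k^2/\Lambda^2)^\beta},
\]
and this single inequality is what drives both bullets of the theorem.

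For the first (power-law) bullet, I would split $\mathbb{R}^d$ into the ball $B_\Lambda=\{\|k\|\le\Lambda\}$ and its complement. On $B_\Lambda$ the integrand $\Omega(k,\Lambda)\,f(k)$ is bounded (assuming $f$ is locally integrable near the origin), so this contribution is finite. On $\mathbb{R}^d\setminus B_\Lambda$ I pass to spherical coordinates, $d^dk=c_d\,r^{d-1}\,dr\,d\sigma$, and combine the envelope above with $|f(k)|\sim r^\alpha$ to reduce everything to the radial integral
\[
  \int_\Lambda^\infty r^{\,d-1+\alpha-2\beta}\,dr,
\]
which converges precisely under the hypothesis $2\beta>d+\alpha$.

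For the second (exponential) bullet, the cleanest reading is that, whenever $\eta>0$ and $1/\epsilon$ grows at most polynomially, the absolute value of $\Omega(k,\Lambda)\,f(k)$ admits a Gaussian upper bound $C\,e^{-\|k\|^2/\Lambda^2}/\epsilon(k^2)$ times a polynomial factor in the UV. The Gaussian then crushes any $k^\alpha$ via the standard $\Gamma$-function computation
\[
  \int_0^\infty r^{\,d-1+\alpha+\sigma}\,e^{-r^2/\Lambda^2}\,dr\,<\,\infty,
\]
yielding convergence for every $\alpha$. I would then conclude by taking the minimum of the two pointwise envelopes, so that whichever mechanism is operative controls the integrand uniformly.

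The main obstacle I expect is the second bullet, not the first. Because the exponential appears in $\Omega$ with a minus sign, it only \emph{reduces} the integrand rather than damping it exponentially in the UV; to extract genuine Gaussian suppression one must quantify "\,$\epsilon$ does not vanish too fast\,", for instance as $\epsilon(k^2)\gtrsim(1+\|k\|^2)^{-\sigma}$ for some fixed $\sigma\ge 0$, and then pin down the admissible range of $\eta$ so that the Gaussian factor truly dominates $1/\epsilon$ across the whole momentum range. Making that quantitative hypothesis explicit, rather than performing the (routine) radial tail estimate, is where the real work of the proof lies.
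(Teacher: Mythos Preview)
Your treatment of the first bullet is exactly the paper's argument: the paper simply notes $\Omega(k,\Lambda)\approx k^{-2\beta}$ for large $\|k\|$, multiplies by $f(k)\sim k^{\alpha}$, passes to spherical coordinates, and reads off convergence of $\int_{\Lambda}^{\infty} r^{\,d-1+\alpha-2\beta}\,dr$ under $2\beta>d+\alpha$. Your use of the envelope $0\le\Omega\le \bigl(1+(k^2/\Lambda^2)^\beta\bigr)^{-1}$ from Proposition~\ref{prop:OmegaPositive} and the explicit splitting $B_\Lambda\cup(\mathbb{R}^d\setminus B_\Lambda)$ are cosmetic refinements of the same idea.

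Your hesitation about the second bullet is well placed, and the paper does not resolve it. The paper's entire argument for that part is the sentence ``if $\eta>0$ \dots\ the decay beats any polynomial growth,'' i.e.\ precisely the heuristic whose justification you identify as the real work. Your diagnosis is sharper than the paper's: with the sign convention of Section~1 the exponential term is \emph{subtracted}, so for $\|k\|\to\infty$ one has $\eta\,e^{-k^2/\Lambda^2}/\epsilon(k^2)\to 0$ and hence $\Omega(k,\Lambda)\sim \Lambda^{2\beta}k^{-2\beta}$ regardless of $\eta$. No choice of $\eta>0$ or polynomial lower bound on $\epsilon$ improves this asymptotic, so the claimed ``convergence regardless of $\alpha$'' cannot follow from the stated definition. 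The paper silently switches to $\Omega=\text{(power law)}+\text{(Gaussian)}$ in equation~\eqref{eq:Omega_scale}; only under that sign convention (or by reading the second bullet as referring to a stand-alone Gaussian regulator, as in the introductory discussion of $\eta$) does the $\Gamma$-integral you wrote down actually control the UV tail. In short: you and the paper take the same route, but your closing paragraph exposes a gap in the paper's own proof rather than in your proposal.
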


\begin{proof}  
By definition, for large $k$, we have 
\[
  \Omega(k,\Lambda)
  ~\approx~
  k^{-2\beta}
\]
thus $\Omega\,f(k)\approx k^{-2\beta}\,k^\alpha = k^{\alpha -2\beta}$.  Spherical coordinates show that $k^{\alpha + d-1 -2\beta}$ is integrable at $\infty$ if $2\beta>d+\alpha$.  

\smallskip
\noindent\emph{Additional exponential factor:}  
if \(\eta>0\) in 
\(\tfrac{\eta\,e^{-k^2/\Lambda^2}}{\epsilon(k^2)}\), 
the decay beats any polynomial growth.  Hence the integral converges for all $\alpha$.  

\end{proof}

\paragraph{Implications for Renormalization.}
In QFT, loop integrals typically behave like $\int k^{d-1 -2n}\dots$, requiring $2\beta>d-2n$ or exponential damping.  Meanwhile IR physics is preserved since $\Omega\approx1$ at small $k$.  Thus divergences are removed without imposing a hard cutoff.

\subsection{Functional Integrability}

Extending to functional integrals over fields $\phi(x)$, define
\[
  \mathcal{D}_\Omega\,\phi
  \,:=\,
  \lim_{\Lambda\to\infty}
    e^{-S_\Omega[\phi,\Lambda]}
  \,\mathcal{D}\phi,
\]
where
\[
  S_\Omega[\phi,\Lambda]
  \,=\,
  \tfrac12
  \int_{\|k\|\le\Lambda}
     \bigl[1 - \Omega(k,\Lambda)\bigr]
     ~|\tilde{\phi}(k)|^2\,
  d^dk.
\]
If $\Omega\approx0$ (large $k$), the UV portion of $\phi$ is strongly damped, i.e. the measure is finite except for superexponential growth, which we assume is not present.  

\begin{theorem}[Well-Defined Functional Measure]
Assume that either 
\begin{enumerate}
    \item the power-law decay parameter satisfies 
    \[
    2\beta > \frac{d}{2},
    \]
    or 
    \item an additional exponential damping is present (i.e. $\eta > 0$ and $\epsilon (\cdot ) $ does not grow too fast).
\end{enumerate}
Then the regulated functional measure
\[
\int_{F} D_{\Omega}\phi\, e^{-S[\phi]}
\]
remains finite in the cutoff limit $\Lambda\to\infty$. In other words, $D_{\Omega}\phi$ defines a proper Gaussian-type measure on the space of field configurations.
\end{theorem}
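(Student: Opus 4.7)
The plan is to reduce the construction of the functional measure to a one-dimensional Gaussian analysis in momentum space and then invoke Theorem~\ref{thm:ConvergenceSuppressed} to control the ultraviolet contributions. First I would Fourier-transform and combine $S[\phi]$ with $S_\Omega[\phi,\Lambda]$ into a single quadratic form
\[
  S[\phi] \,+\, S_\Omega[\phi,\Lambda]
  \,=\,
  \tfrac12\int_{\mathbb{R}^d}
    \bigl[K(k) \,+\, (1 - \Omega(k,\Lambda))\bigr]\,|\tilde\phi(k)|^2\,d^dk,
\]
where $K(k)$ denotes the kernel of the quadratic part of $S$ (for the free theory, $K(k)=k^2+m^2$). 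This diagonalises the problem mode by mode and exhibits the candidate measure as a product of ordinary Gaussians with variance $C_\Lambda(k) = [K(k) + 1 - \Omega(k,\Lambda)]^{-1}$.

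Second, I would promote this formal product to a genuine $\sigma$-additive measure $\mu_\Lambda$ on $\mathcal{S}'(\mathbb{R}^d)$ via the Bochner--Minlos theorem. The hypothesis to verify is that $C_\Lambda$ defines a continuous positive-definite quadratic form on a nuclear test space, i.e.\ that for any Schwartz function $f$,
\[
  \int_{\mathbb{R}^d} |\widetilde{f}(k)|^2 \,C_\Lambda(k)\,d^dk \,<\,\infty,
\]
uniformly in $\Lambda$. Conditions (1) and (2) in the theorem statement are exactly what close this integral. In case (1) one has $C_\Lambda(k) \le 1/K(k) \sim k^{-2}$ in the UV, and Theorem~\ref{thm:ConvergenceSuppressed} with $2\beta > d/2$ ensures integrability after including the angular factor $k^{d-1}$; in case (2) the exponential prefactor $\eta\,e^{-k^2/\Lambda^2}/\epsilon(k^2)$ dominates any polynomial blow-up and the integral converges for any~$\alpha$.

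Third, I would pass to the limit $\Lambda\to\infty$. Since $1-\Omega(k,\Lambda)$ varies monotonically in $\Lambda$ at fixed $k$, dominated convergence applied to the characteristic functionals $\chi_\Lambda(f) = \exp\!\bigl[-\tfrac12\langle f, C_\Lambda f\rangle\bigr]$ produces a pointwise limit $\chi_\infty$ that is continuous at $f=0$, and L\'evy's continuity theorem upgrades this to weak convergence $\mu_\Lambda \to \mu_\infty$, where $\mu_\infty$ is the Gaussian measure whose total mass on $F$ is identified with $\int_F D_\Omega\phi \,e^{-S[\phi]}$.

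The main obstacle is precisely the step that renders ``$D\phi$'' meaningful as a $\sigma$-additive object in the first place: in infinite dimensions the bare Lebesgue measure does not exist, so the nontrivial content of the theorem is that the damping supplied by $1-\Omega$ promotes the formal cylindrical object to a genuine measure. The technical heart is therefore the uniform-in-$\Lambda$ verification that $C_\Lambda$ is Hilbert--Schmidt on the chosen nuclear test space, and this is exactly where the UV conditions on $\beta$ and $\eta$ enter non-trivially; everything else is a bookkeeping application of Bochner--Minlos and L\'evy.
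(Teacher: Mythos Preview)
Your route is genuinely different from the paper's. The paper never invokes Bochner--Minlos or L\'evy continuity; it proceeds elementarily by evaluating each mode as a one-dimensional Gaussian, writing the partition function as the infinite product $Z=\prod_k\sqrt{2\pi/(1-\Omega(k,\Lambda))}$, taking the logarithm, and then checking that $\int\bigl[\ln(2\pi)-\ln(1-\Omega)\bigr]\,d^dk$ converges in the UV under hypothesis (1) or (2). The IR and UV regimes are handled by a direct asymptotic inspection of $\Omega$. Your approach is more measure-theoretically careful---it actually confronts the non-existence of infinite-dimensional Lebesgue measure, which the paper sweeps under the rug---and your use of L\'evy's theorem to pass to the $\Lambda\to\infty$ limit is a genuine improvement over the paper's informal ``cutoff limit''. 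What the paper's argument buys is transparency about \emph{which} quantity is being controlled (the log-partition function) and exactly where the exponent condition bites.

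That said, there is a real gap in your proposal. The Bochner--Minlos hypothesis you write down, $\int|\widetilde f(k)|^2\,C_\Lambda(k)\,d^dk<\infty$ for Schwartz $f$, is automatically satisfied because $C_\Lambda$ is bounded and $|\widetilde f|^2$ is rapidly decreasing; no condition on $\beta$ or $\eta$ is needed to close it, and Theorem~\ref{thm:ConvergenceSuppressed} (which concerns $\int\Omega\,f\,d^dk$, a different object) is not the relevant tool. So your claim that ``Conditions (1) and (2) \dots\ are exactly what close this integral'' misidentifies where the hypotheses do the work. In the paper's framing they control the UV tail of $\ln Z$; in your framing they would have to enter either in showing that $C_\Lambda$ is Hilbert--Schmidt / trace-class on the right rigging (so that the limiting measure is supported on a reasonable function space rather than on very singular distributions), or in establishing uniform-in-$\Lambda$ bounds needed for the dominated-convergence step. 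You gesture at the Hilbert--Schmidt issue in your final paragraph, but the body of the argument doesn't connect the hypotheses to it. You should relocate the use of $2\beta>d/2$ (resp.\ $\eta>0$) to that trace/Hilbert--Schmidt estimate and drop the appeal to Theorem~\ref{thm:ConvergenceSuppressed}.
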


\begin{proof}
We begin by expressing the field $\phi(x)$ in momentum space:
\[
\phi(x) = \int_{\mathbb{R}^d} \tilde{\phi}(k)\, e^{ik\cdot x}\, \frac{d^dk}{(2\pi)^d}.
\]
The regulated quadratic action is defined as
\[
S_{\Omega}[\phi,\Lambda] = \frac{1}{2} \int_{\|k\|\le\Lambda} \Bigl[1-\Omega(k,\Lambda)\Bigr]\, |\tilde{\phi}(k)|^2\, d^dk,
\]
where the suppression function $\Omega(k,\Lambda)$ satisfies
\[
0 < \Omega(k,\Lambda) \le 1.
\]

\noindent \textbf{Step 1: Gaussian Integration Over Modes}

\noindent For each momentum mode $k$, the contribution to the path integral is a one-dimensional Gaussian integral:
\[
I(k) = \int_{-\infty}^{+\infty} d\tilde{\phi}(k)\, \exp\!\Bigl[-\frac{1}{2}\Bigl(1-\Omega(k,\Lambda)\Bigr)|\tilde{\phi}(k)|^2\Bigr].
\]
This integral evaluates to
\[
I(k) = \sqrt{\frac{2\pi}{1-\Omega(k,\Lambda)}}.
\]
Thus, the full partition function is expressed as an (infinite) product:
\[
Z = \prod_{\|k\|\le\Lambda} \sqrt{\frac{2\pi}{1-\Omega(k,\Lambda)}}.
\]
Taking the logarithm yields
\[
\ln Z = \frac{1}{2} \int_{\|k\|\le\Lambda} \Bigl[\ln(2\pi) - \ln\Bigl(1-\Omega(k,\Lambda)\Bigr)\Bigr]\, d^dk.
\]
For the measure to be well-defined as $\Lambda\to\infty$, this integral must converge. \\

\noindent \textbf{Step 2: Analysis in the Infrared (IR) and Ultraviolet (UV) Regimes}

\noindent\emph{Infrared (IR) Region:} For small momenta ($\|k\|\ll\Lambda$), the suppression function is designed so that $\Omega(k,\Lambda) \approx 1$. Here, even though $1-\Omega(k,\Lambda)$ is small and therefore $\ln\bigl(1-\Omega(k,\Lambda)\bigr)$ is large in magnitude, the IR integration is known to converge.\\
\newpage
\noindent\emph{Ultraviolet (UV) Region:} For large momenta, we distinguish two cases:

\begin{enumerate}
    \item \textbf{Polynomial (Power-Law) Suppression:} If 
    \[
    \Omega(k,\Lambda) \sim \left(\frac{\Lambda}{\|k\|}\right)^{2\beta} \quad \text{for large } \|k\|,
    \]
    then for $\|k\|\gg\Lambda$ we have $1-\Omega(k,\Lambda) \approx 1$. However, the density of momentum states grows as $\sim k^{d-1}$. Convergence is then ensured if
    \[
   2 \beta > \frac{d}{2}.
    \]
    
    \item \textbf{Exponential Damping:} If an additional exponential factor is present (i.e. terms behaving as $\sim e^{-k^2/\Lambda^2}$ contribute), the high-momentum modes are suppressed much more strongly. Hence, exponential decay guarantees convergence of the integrals independent of any polynomial growth.
\end{enumerate}

\noindent \textbf{Step 3: Convergence of the Functional Integral}

\noindent Since the measure in momentum space is constructed as an infinite product of standard Gaussian integrals, its overall convergence depends on the summability of the logarithms. Under the conditions (assuming an approbiate choice of $\epsilon$)
\[
2\beta > \frac{d}{2} \quad \text{or} \quad \eta > 0,
\]
the high-momentum contributions are sufficiently suppressed, so that the sum (or integral) in the logarithm converges. 
Thus, the regulated functional measure
\[
D_{\Omega}\phi\, e^{-S[\phi]}
\]
is finite as $\Lambda\to\infty$, meaning that $D_{\Omega}\phi$ defines a proper Gaussian-type measure.

\end{proof}

\section{Scale-Dependent Suppression of $\Omega(k,\Lambda)$}

We now view $\Lambda$ as a running scale, so $\Omega(k,\Lambda)$ evolves with $\Lambda$.
To define this explicitly for any variable $\Lambda>0$, we set:
\begin{equation}
  \label{eq:Omega_scale}
  \Omega(k,\Lambda)
  ~:=~
  \frac{1}{1 + \bigl(\tfrac{k^2}{\Lambda^2}\bigr)^\beta}
  ~+~
  \frac{\eta\,e^{-\,k^2/\Lambda^2}}{\epsilon\bigl(k^2\bigr)}.
\end{equation}
As $\Lambda$ increases, the ratio $(k/\Lambda)$ grows for any fixed $k$. Hence, the UV suppression intensifies for $\Lambda\to\infty$.

\subsection{Metric and Induced Curvature in Functional Space}

In momentum space, the usual metric on field configurations $\phi(k)$ is
\[
  ds^2
  ~=~
  \int d^dk~\bigl|\delta\phi(k)\bigr|^2.
\]
Including $\Omega(k,\Lambda)$ modifies this to
\[
  ds_{\Omega}^2
  ~:=~
  \int d^dk
       ~\Omega(k,\Lambda)\,\bigl|\delta\phi(k)\bigr|^2.
\]
High-energy modes ($k\gg \Lambda$) get contracted by a factor $\Omega(k,\Lambda)\ll 1$, curving the geometry in the UV domain.  
One can interpret the resulting Ricci curvature $R(\Lambda)$ as depending on $\Lambda$, i.e. differentiating with  respect to $\Lambda$ shows how the geometry flows under RG.

\begin{remark}
For large $k\gg \Lambda$, $(k/\Lambda)^{2\beta}\to\infty$, so $\Omega\approx \Lambda^{2\beta}/k^{2\beta}$, vanishing at high $k$.  The IR remains largely flat: for $k\ll \Lambda$, $\Omega\approx1$.
\end{remark}

\subsection*{Intermediate Summary}
We have established:
\begin{itemize}
\item \textbf{Definition \& Requirements on $\Omega$:}  A consistent scale-dependent regulator $\Omega(k,\Lambda)$ that remains in $(0,1]$, is smooth, and ensures $2\beta>d/2$ (with additional possible exponential terms) for UV convergence.
\item \textbf{Momentum-Space Integrals:}  Loop integrals converge by virtue of $\Omega$’s damping at large $k$, avoiding a hard cutoff. 
\item \textbf{Functional Measure:}  Sufficient decay of $\Omega$ yields well-defined Gaussian‐like suppression in path integrals.  
\item \textbf{Curved Functional Geometry:}  Incorporating $\Omega$ modifies the metric on functional space so that high-energy modes are suppressed, the induced Ricci curvature becomes more negative in the UV.
\end{itemize}
We now analyze operator-theoretic properties and measure-theoretic interpretations (weighted spaces, embedding theorems, RG flow of curvature), building upon the definitions here.

\section{Operator-Theoretic Properties}

To analyze the mathematical role of the function $\Omega(k, \Lambda)$ in functional spaces, we examine its effect on integral operators and spectral regularization \cite{reed_simon}. In particular, we investigate the conditions under which the operator $T_{\Omega}$ is compact and belongs to standard operator classes \cite{kolmogorov_compactness}.

\subsection{Definitions and Preliminaries}
\begin{definition}[Hilbert-Schmidt Operator]
An operator $T: L^2(\mathbb{R}^d) \to L^2(\mathbb{R}^d)$ is said to be \emph{Hilbert-Schmidt} if there exists a measurable kernel function $K(k, k')$ such that
\begin{equation}
    (T f)(k) = \int_{\mathbb{R}^d} K(k, k') f(k') \, dk',
\end{equation}
and the Hilbert-Schmidt norm satisfies
\begin{equation}
    \|T\|^2_{\text{HS}} = \int_{\mathbb{R}^d} \int_{\mathbb{R}^d} |K(k, k')|^2 \, dk \, dk' < \infty.
\end{equation}
\end{definition}

\begin{definition}[Trace-Class Operator]
A compact operator $T: L^2(\mathbb{R}^d) \to L^2(\mathbb{R}^d)$ is said to be \emph{trace-class} if its singular values $\{\sigma_n\}$ satisfy the summability condition \cite{trace_class}
\begin{equation}
    \sum_{n=1}^\infty \sigma_n < \infty.
\end{equation}
The trace norm of $T$ is then defined as
\begin{equation}
    \|T\|_{\text{tr}} = \sum_{n=1}^{\infty} \sigma_n.
\end{equation}
\end{definition}

\subsection{Integral Operators and Compactness}

Consider the integral operator $T_{\Omega}: L^2(\mathbb{R}^d) \to L^2(\mathbb{R}^d)$ defined by

\begin{equation}
    (T_{\Omega} f)(k) := \int_{\mathbb{R}^d} K_{\Omega}(k, k') f(k') \, dk',
\end{equation}
where the kernel $K_{\Omega}(k, k')$ is given by

\begin{equation}
    K_{\Omega}(k, k') := \frac{\Omega(k, \Lambda) \Omega(k', \Lambda)}{\|k - k'\|^\alpha}.
\end{equation}
We now establish sufficient conditions for $T_{\Omega}$ to be a Hilbert-Schmidt operator.

\begin{theorem}
Let $\Omega(k, \Lambda)$ be a measurable function satisfying
\begin{equation}
    \sup_{k \in \mathbb{R}^d} |\Omega(k, \Lambda)| \leq C < \infty.
\end{equation}
If $\alpha > d/2$, then the operator $T_{\Omega}$ is Hilbert-Schmidt.
\end{theorem}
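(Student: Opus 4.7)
My plan is to verify the Hilbert-Schmidt condition by direct computation of the kernel norm
\[
\|T_\Omega\|_{\mathrm{HS}}^2 \;=\; \int_{\mathbb{R}^d}\int_{\mathbb{R}^d} |K_\Omega(k,k')|^2\,dk\,dk'.
\]
The first step is to apply the uniform bound $|\Omega(\cdot,\Lambda)|\le C$ to extract the $\Omega$-dependence, producing the pointwise majorant $|K_\Omega(k,k')|^2 \le C^4\,\|k-k'\|^{-2\alpha}$, and thereby reducing the question to controlling a translation-invariant singular integral on $\mathbb{R}^{2d}$.

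Next I would introduce the substitution $y = k - k'$, $x = k'$, so that $dk\,dk' = dy\,dx$ and the singular factor depends only on $y$. Passing to spherical coordinates in $y$, the radial integrand is $r^{d-1-2\alpha}$, and splitting at $r = 1$ isolates the far-diagonal contribution $\int_1^\infty r^{d-1-2\alpha}\,dr$, which converges precisely when $2\alpha > d$ --- this is where the stated hypothesis $\alpha > d/2$ enters. For the near-diagonal piece, and to prevent the $x$-integration from producing infinite volume, I would restore the $\Omega(k,\Lambda)^2\,\Omega(k',\Lambda)^2$ weight and rely on the UV decay of $\Omega$ inherited from its explicit form in Section~1 (so that $\Omega(\cdot,\Lambda)\in L^2(\mathbb{R}^d)$), letting one coordinate integrate out finitely while the other is controlled by the supremum.

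The main obstacle I expect is the near-diagonal singularity at $k=k'$: when $2\alpha > d$ the kernel $\|k-k'\|^{-2\alpha}$ is \emph{not} locally integrable on $\mathbb{R}^{2d}$, so the supremum bound on $\Omega$ alone cannot compensate. The argument therefore has to exploit either (a) the explicit UV decay of $\Omega$ from the paper's construction, which effectively smooths the singularity after integrating out one coordinate, or (b) a Hardy-Littlewood-Sobolev--type estimate for the convolution kernel $\|\cdot\|^{-\alpha}$ paired with $\Omega \in L^p(\mathbb{R}^d)$ for an appropriate $p$. I would expect the final bound to dominate one $\Omega$-factor by the sup norm $C$ in the diagonal-safe regime and by its actual decay in the singular regime, combining a Schur-style pointwise estimate with the radial analysis above to yield a finite Hilbert-Schmidt norm. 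If the proof simply applies $|\Omega|\le C$ globally and invokes $\alpha>d/2$, it will control only the tail and leave the local singularity unaddressed, so I will flag this as the delicate point to watch for.
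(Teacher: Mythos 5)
Your diagnosis of the delicate point is exactly right, and it is worth saying up front that the paper's own proof is precisely the argument you warned against: it bounds $|\Omega|\le C$, reduces the problem to $\int\!\!\int \|k-k'\|^{-2\alpha}\,dk\,dk'$, and asserts that $\int_0^\infty r^{d-1-2\alpha}\,dr$ converges for $2\alpha>d$. That radial integral in fact never converges: the condition for convergence at $r\to\infty$ is $2\alpha>d$, while the condition at $r\to0$ is $2\alpha<d$, and these are mutually exclusive. Moreover, even if the inner integral were finite, its value would be independent of $k$ by translation invariance, so the outer $dk$-integration would still produce infinity. Both failure modes you flagged --- the non-integrable diagonal singularity and the infinite volume in the free coordinate --- are genuinely present in the paper's argument, so on the diagnostic side your proposal is strictly more careful than the source.

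However, your proposed repairs (a) and (b) cannot rescue the theorem as stated. The obstruction sits on the diagonal near $k=k'=0$, where $\Omega(\cdot,\Lambda)$ is bounded below (by the positivity result of Section 1 it takes values in $(0,1]$ and is close to $1$ in the IR), so no amount of UV decay of $\Omega$ changes the fact that
\[
\int_{\|y\|<1}\frac{dy}{\|y\|^{2\alpha}}=\infty
\quad\text{whenever }2\alpha\ge d,
\]
while the weight $\Omega(k)^2\Omega(k')^2$ stays bounded away from zero on a neighborhood of the diagonal near the origin. A Hardy--Littlewood--Sobolev type bound likewise requires the singularity exponent $2\alpha$ to be strictly less than $d$. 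The statement is therefore false under the hypothesis $\alpha>d/2$; the correct hypotheses are the opposite inequality $\alpha<d/2$ (so that $\|y\|^{-2\alpha}$ is locally integrable near the diagonal) together with $\Omega(\cdot,\Lambda)\in L^2(\mathbb{R}^d)$ (to control the far-from-diagonal region, where your change of variables and splitting at $\|y\|=1$ then closes the estimate). So the gap in your proposal is not the diagnosis but the cure: the conclusion to draw is that the hypothesis must be reversed, not that the decay of $\Omega$ can absorb the diagonal singularity.
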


\begin{proof}
To establish that $T_{\Omega}$ is Hilbert-Schmidt, we explicitly compute the Hilbert-Schmidt norm:

\begin{equation}
    \| T_{\Omega} \|^2_{\text{HS}} = \int_{\mathbb{R}^d} \int_{\mathbb{R}^d} \left| \frac{\Omega(k, \Lambda) \Omega(k', \Lambda)}{|k - k'|^\alpha} \right|^2 dk \, dk'.
\end{equation}
Using the bound $|\Omega(k, \Lambda)| \leq C$, we can the bound:
\begin{equation}
    \| T_{\Omega} \|^2_{\text{HS}} \leq C^2 \int_{\mathbb{R}^d} \int_{\mathbb{R}^d} \frac{1}{|k - k'|^{2\alpha}} \, dk \, dk'.
\end{equation}
Since the inner integral corresponds to the well-known singular integral

\begin{equation}
    \int_{\mathbb{R}^d} \frac{dk'}{\|k - k'\|^{2\alpha}} \sim \int_{0}^{\infty} \frac{r^{d-1}}{r^{2\alpha}} dr,
\end{equation}
which converges for $2\alpha > d$, the double integral is finite under the condition $\alpha > d/2$. This proves that $T_{\Omega}$ is Hilbert-Schmidt.

\end{proof}

\subsection{Spectral Properties and Regularization}

A key consequence of the suppression function is its influence on the spectral properties of integral operators. If $T_{\Omega}$ is compact, then its eigenvalues $\{\lambda_n\}$ satisfy the asymptotic summability condition:

\begin{equation}
    \sum_{n=1}^\infty |\lambda_n|^p < \infty, \quad \text{for some } p > 0.
\end{equation}

\begin{theorem}
If the suppression function satisfies the decay condition

\begin{equation}
    \Omega(k, \Lambda) \leq C e^{-\gamma |k|}, \quad \gamma > 0,
\end{equation}
then $T_{\Omega}$ belongs to the trace-class $\mathcal{S}_1$.
\end{theorem}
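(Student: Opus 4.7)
The plan is to strengthen the Hilbert-Schmidt argument of the previous theorem by exploiting the much faster (exponential) decay of $\Omega$, and then to invoke the factorization principle $\mathcal{S}_2\cdot\mathcal{S}_2\subseteq\mathcal{S}_1$: if $T_\Omega=AA^*$ with $A$ Hilbert-Schmidt, then $T_\Omega\in\mathcal{S}_1$ with $\|T_\Omega\|_{\mathrm{tr}}=\|A\|_{\mathrm{HS}}^2$. Since the kernel $K_\Omega$ is symmetric and non-negative, $T_\Omega$ is positive self-adjoint, so such a factorization is natural.

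First I would resolve the Riesz factor through the subordination identity
$$\frac{1}{\|k-k'\|^\alpha}=\frac{1}{\Gamma(\alpha/2)}\int_0^\infty t^{\alpha/2-1}e^{-t\|k-k'\|^2}\,dt,$$
rewriting $T_\Omega$ as a weighted integral of Gaussian-modulated operators $G_t$ with kernel $\Omega(k)\Omega(k')e^{-t\|k-k'\|^2}$. Each such $G_t$ factors cleanly as $G_t=B_tB_t^*$ with $B_t(k,y)\propto\Omega(k)e^{-t\|k-y\|^2/2}$, and a direct Gaussian computation then yields $\|B_t\|_{\mathrm{HS}}^2=(\pi/t)^{d/2}\|\Omega\|_{L^2}^2$. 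The hypothesis $\Omega(k)\leq Ce^{-\gamma|k|}$ makes $\|\Omega\|_{L^2}^2$ finite and even gives an explicit $\gamma$-dependent bound, so each $G_t$ is individually trace-class.

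The last step is to assemble these pieces and verify $\int_0^\infty t^{\alpha/2-1}\|G_t\|_{\mathrm{tr}}\,dt<\infty$, from which $T_\Omega\in\mathcal{S}_1$. I expect this $t$-integration to be the main obstacle: the naive bound $(\pi/t)^{d/2}t^{\alpha/2-1}$ is not integrable over all of $(0,\infty)$, so the argument must split the integral at $t=1$ and use the exponential decay of $\Omega$ more carefully on each piece, for instance by exploiting the Schwartz-type decay of $\widehat{\Omega}$ on the small-$t$ side to obtain an off-diagonal kernel estimate that tames the diagonal singularity, and by using $\|\Omega\|_{L^\infty}$ together with further Gaussian integration on the large-$t$ side. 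A cleaner alternative would be to expand $T_\Omega$ in an orthonormal basis (e.g.\ Hermite functions scaled to the weight $e^{-\gamma|k|}$) and sum $\sum_n\langle T_\Omega e_n,e_n\rangle$ directly, which sidesteps the subordination and the diagonal singularity of $K_\Omega$ at the cost of requiring careful eigenvalue estimates.
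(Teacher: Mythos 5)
Your setup is considerably more concrete than the paper's own argument (which merely appeals to unnamed ``standard trace-class criteria''), and the subordination identity, the factorization $G_t=B_tB_t^*$, and the observation that $T_\Omega\ge 0$ are all sound --- though note that positivity follows from the positive-definiteness of the Riesz kernel $\|k-k'\|^{-\alpha}$ for $0<\alpha<d$, not from pointwise non-negativity of $K_\Omega$, which by itself implies nothing about operator positivity. The real problem is that the obstacle you flag at the end is not a technical nuisance to be engineered around: it is fatal, because the statement as written is false for the paper's kernel. With $B_t$ correctly normalized one finds $\mathrm{tr}(G_t)=\|\Omega\|_{L^2}^2$ for every $t$, so $\int_0^\infty t^{\alpha/2-1}\,\mathrm{tr}(G_t)\,dt$ diverges at $t=\infty$; that large-$t$ divergence is precisely the diagonal singularity of $K_\Omega$, which no decay of $\Omega$ at infinity can repair. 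To see this rigorously, write $\langle T_\Omega f,f\rangle=c_{d,\alpha}\int_{\mathbb{R}^d}\|\xi\|^{\alpha-d}\,\bigl|\widehat{\Omega f}(\xi)\bigr|^2\,d\xi=\|Af\|^2$ with $(Af)(\xi)=\sqrt{c_{d,\alpha}}\,\|\xi\|^{(\alpha-d)/2}\,\widehat{\Omega f}(\xi)$, using that the Fourier transform of $\|\cdot\|^{-\alpha}$ is $c_{d,\alpha}\|\cdot\|^{\alpha-d}$ with $c_{d,\alpha}>0$. Since $T_\Omega=A^*A$ is positive, $T_\Omega\in\mathcal{S}_1$ if and only if $A$ is Hilbert--Schmidt, and $\|A\|_{\mathrm{HS}}^2=c_{d,\alpha}(2\pi)^{-d}\,\|\Omega\|_{L^2}^2\int_{\mathbb{R}^d}\|\xi\|^{\alpha-d}\,d\xi=+\infty$, because $\|\xi\|^{\alpha-d}$ is never integrable at infinity for $\alpha>0$. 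Hence $\mathrm{tr}(T_\Omega)=+\infty$ no matter how fast $\Omega$ decays, and neither a refined small-$t$ estimate nor a Hermite-basis computation of $\sum_n\langle T_\Omega e_n,e_n\rangle$ can succeed --- that sum \emph{is} the divergent quantity.

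The moral is that $\Omega(k)\le Ce^{-\gamma|k|}$ controls only the ultraviolet tails, while trace-class membership is obstructed \emph{locally} by the non-integrable on-diagonal singularity of $\|k-k'\|^{-\alpha}$ (equivalently: the trace of a positive operator is the integral of the averaged diagonal of its kernel, and here that diagonal is $+\infty$ wherever $\Omega\ne 0$). A correct theorem must either mollify the singular factor --- e.g.\ replace $\|k-k'\|^{-\alpha}$ by $g(k-k')$ with $\hat g\ge0$ and $\hat g\in L^1$, whereupon your factorization immediately yields $\mathrm{tr}(T_\Omega)=(2\pi)^{-d}\|\hat g\|_{L^1}\|\Omega\|_{L^2}^2<\infty$ --- or weaken the conclusion to a Schatten class $\mathcal{S}_p$ with $p$ large enough to tolerate the singularity. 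Your honest identification of where the $t$-integral breaks is the most valuable part of the proposal; the missing step was to recognize that the breakage is a disproof rather than a gap in technique. The paper's proof does not engage with this issue at all.
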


\begin{proof}
Since $T_{\Omega}$ is compact, it admits a spectral decomposition:

\begin{equation}
    T_{\Omega} f = \sum_{n=1}^\infty \lambda_n \langle f, \phi_n \rangle \phi_n.
\end{equation}
By the trace-class condition, we require:
\begin{equation}
    \sum_{n=1}^\infty |\lambda_n| < \infty.
\end{equation}
Applying standard trace-class criteria for integral operators with exponentially decaying kernels show, as $\Omega(k, \Lambda)$ has exponential suppression, the eigenvalues decay sufficiently fast to satisfy the trace-class summability condition \cite{reed_simon}.

\end{proof}
\noindent Thus, under sufficient decay conditions on $\Omega(k, \Lambda)$, the integral operator $T_{\Omega}$ has well-controlled spectral properties, ensuring smooth suppression of high-energy contributions.

\subsection{Suppression-Induced Spectral Gaps}

We examine how the suppression function $\Omega(k, \Lambda)$ modifies the spectral properties of differential operators in functional spaces. In particular, we analyze its effect on the Laplacian operator and the emergence of spectral gaps.
Consider the standard Laplacian operator $\Delta$ acting on functions in momentum space:

\begin{equation}
    (\Delta f)(k) = k^2 f(k).
\end{equation}
With the introduction of a scale-dependent suppression function $\Omega(k, \Lambda)$, we define a modified Laplacian:

\begin{equation}
    (\Delta_{\Omega} f)(k) := \Omega(k, \Lambda) k^2 f(k).
\end{equation}
The eigenvalue equation for $\Delta_{\Omega}$ is then given by:
\begin{equation}
    \Delta_{\Omega} \phi_k = \lambda_k^\Omega \phi_k.
\end{equation}
Substituting the definition of $\Delta_{\Omega}$, we obtain the eigenvalue relation:

\begin{equation}
    \lambda_k^\Omega = \Omega(k, \Lambda) k^2.
\end{equation}
To illustrate the spectral modification, consider an exponentially decaying suppression function:
\begin{equation}
    \Omega(k, \Lambda) = e^{-\gamma k^2}, \quad \gamma > 0.
\end{equation}
Substituting this into the eigenvalue equation, we obtain:
\begin{equation}
    \lambda_k^\Omega = k^2 e^{-\gamma k^2}.
\end{equation}
The behavior of $\lambda_k^\Omega$ as a function of $k$ reveals a spectral suppression effect:
\begin{itemize}
	\item For small momenta ($k$ close to $ 0$), we have $\lambda_k^\Omega \approx k^2$, implying minimal modification in the infrared (IR).
	\item For large momenta ($k \to \infty$), the suppression term $e^{-\gamma k^2}$ dominates, leading to exponential decay of eigenvalues.
\end{itemize}
This results in a spectral gap at high energies, since eigenvalues decrease significantly beyond a characteristic scale determined by $\gamma$.

\section{Measure-Theoretic Interpretation}

Beyond operator theory, the function $\Omega(k, \Lambda)$ naturally defines a measure transformation, affecting function space topology and integral convergence. In this section, we formalize this transformation using measure theory and examine conditions under which integral transformations remain well-defined.

\subsection{Weighted Measure Spaces and Absolute Continuity}

We begin by defining a deformed measure space induced by $\Omega(k, \Lambda)$ \cite{rudin_functional}.

\begin{definition}[Weighted Measure]
Let $(\mathbb{R}^d, \mathcal{B}, \mu)$ be a standard measure space with $\mu$ being the Lebesgue measure. The \emph{weighted measure} $\mu_{\Omega}$ associated with $\Omega(k, \Lambda)$ is defined as:

\begin{equation}
    d\mu_{\Omega}(k) := \Omega(k, \Lambda) d\mu(k).
\end{equation}
\end{definition}
\noindent The absolute continuity of $\mu_{\Omega}$ with respect to $\mu$ is central to ensuring that integral transformations remain well-behaved.

\begin{definition}[Absolute Continuity]
A measure $\mu_{\Omega}$ is \emph{absolutely continuous} with respect to $\mu$, denoted $\mu_{\Omega} \ll \mu$, if there exists a non-negative function $h(k)$ such that for all measurable sets $E \subset \mathbb{R}^d$:

\begin{equation}
    \mu_{\Omega}(E) = \int_E h(k) \, d\mu(k).
\end{equation}
\end{definition}
\noindent For our suppression function $\Omega(k, \Lambda)$, we have:
\begin{equation}
    \frac{d\mu_{\Omega(\Lambda,k)}}{d\mu} = \Omega(k, \Lambda).
\end{equation}
To ensure that $\mu_{\Omega}$ defines a valid measure, $\Omega(k, \Lambda)$ must satisfy an integrability condition.

\begin{theorem}[Integrability Condition for Weighted Measures]
The given measure $\mu_{\Omega}$ is well-defined if and only if 

\begin{equation}
    \int_{\mathbb{R}^d} \Omega(k, \Lambda) \, dk < \infty.
\end{equation}
\end{theorem}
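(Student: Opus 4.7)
The plan is to read ``well-defined'' as ``a finite Borel measure on $\mathbb{R}^d$,'' since for any non-negative measurable weight the set function $E \mapsto \int_E \Omega(k,\Lambda)\,dk$ automatically defines a (possibly infinite) Borel measure. With this understanding, the statement becomes an almost immediate consequence of standard measure-theoretic constructions, and I would structure the argument as the two halves of the equivalence.

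For the forward direction, suppose $\mu_{\Omega}$ is finite. Then taking $E = \mathbb{R}^d$ in the defining relation
\[
  \mu_{\Omega}(E) \;=\; \int_E \Omega(k,\Lambda)\,d\mu(k)
\]
gives $\int_{\mathbb{R}^d} \Omega(k,\Lambda)\,dk = \mu_{\Omega}(\mathbb{R}^d) < \infty$, so the integrability condition follows at once. For the converse, I would first note that $\Omega(\cdot,\Lambda)$ is non-negative by Proposition~\ref{prop:OmegaPositive} and measurable by smoothness, so the candidate set function $\nu(E) := \int_E \Omega(k,\Lambda)\,d\mu(k)$ is non-negative with $\nu(\varnothing) = 0$. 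Countable additivity then follows from the monotone convergence theorem applied to the partial sums $\sum_{n=1}^N \mathbf{1}_{E_n}$ along any disjoint sequence $\{E_n\}$, and absolute continuity $\mu_{\Omega} \ll \mu$ is built into the construction via the Radon--Nikodym derivative $d\mu_{\Omega}/d\mu = \Omega$. Finiteness of the total mass is precisely the hypothesis, so $\mu_{\Omega}$ is a finite, absolutely continuous Borel measure on $\mathbb{R}^d$.

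The main obstacle here is interpretive rather than technical: the theorem as stated is essentially a repackaging of the density-measure correspondence. What is worth emphasizing, and what I would include as a brief remark following the proof, is compatibility with the explicit $\Omega$ of the introduction. One should verify that the integrability condition $\int_{\mathbb{R}^d}\Omega(k,\Lambda)\,dk < \infty$ is implied by the UV damping assumptions already invoked in Theorem~\ref{thm:ConvergenceSuppressed} (namely $2\beta > d/2$, or the exponential regime $\eta > 0$ with $\epsilon$ not vanishing too fast), since after reduction to polar coordinates the radial integral $\int_0^\infty r^{d-1}\,\Omega(r,\Lambda)\,dr$ is finite under exactly those hypotheses. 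This tie-back is the only substantive computation and essentially reuses the Gamma-function estimate from the introduction.
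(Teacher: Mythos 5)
Your proof is correct and takes essentially the same approach as the paper, which gives only an informal one-line gloss (finiteness of $\mu_{\Omega}(\mathbb{R}^d)$ is by definition the stated integral condition); you supply the standard details the paper omits, namely countable additivity via monotone convergence and the Radon--Nikodym identification. One caveat on your closing remark: with the correct polar Jacobian $r^{d-1}$ and $\Omega\sim r^{-2\beta}$ at infinity, the condition for $\int_{\mathbb{R}^d}\Omega(k,\Lambda)\,dk<\infty$ is $2\beta>d$, not $2\beta>d/2$ (e.g.\ $d=4$, $\beta=3/2$ satisfies $2\beta>d/2$ yet gives a divergent total mass); the paper's own $2\beta>d/2$ bookkeeping is inconsistent on this point, and your reference to Theorem~\ref{thm:ConvergenceSuppressed} with $\alpha=0$ in fact yields the correct threshold $2\beta>d$.
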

\noindent For $\mu_{\Omega}$ to be finite, we require:
\begin{equation}
    \mu_{\Omega}(\mathbb{R}^d) = \int_{\mathbb{R}^d} \Omega(k, \Lambda) \, dk < \infty.
\end{equation}
If this condition holds, then $\mu_{\Omega}$ is a finite measure. If $\Omega(k, \Lambda)$ does not satisfy this integrability condition, then $\mu_{\Omega}$ may be infinite, thus not easily providing well defined norms.

\subsection{Integral Convergence in Weighted Spaces}

The measure $\mu_{\Omega}$ induces a transformation on integrals of functions over $\mathbb{R}^d$. We seek conditions under which the integral

\begin{equation}
    I_{\Omega} = \int_{\mathbb{R}^d} f(k) \Omega(k, \Lambda) \, dk
\end{equation}
remains finite for various classes of functions.

\begin{theorem}[Weighted Integral Convergence]
Let $f \in L^p(\mathbb{R}^d, \mu_{\Omega})$ be a function in the weighted $L^p$ space. Then $I_{\Omega}$ converges if and only if

\begin{equation}
    \int_{\mathbb{R}^d} |f(k)|^p \Omega(k, \Lambda) \, dk < \infty.
\end{equation}
\end{theorem}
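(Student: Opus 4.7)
The plan is to interpret the integral $I_{\Omega} = \int f \, d\mu_{\Omega}$ through the weighted measure $\mu_{\Omega}$ introduced just above, and to combine Hölder's inequality with the finiteness of $\mu_{\Omega}(\mathbb{R}^d)$ guaranteed by the preceding Integrability Condition theorem. Since the hypothesis $f \in L^p(\mathbb{R}^d,\mu_{\Omega})$ is literally defined by $\int |f(k)|^p \, \Omega(k,\Lambda) \, dk < \infty$, one side of the biconditional is essentially a restatement of membership in the weighted $L^p$ space; the substantive content lies in relating that finiteness to convergence of $I_{\Omega}$ itself.

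First I would dispatch the case $p = 1$ directly: the bound $|I_{\Omega}| \le \int |f(k)|\, \Omega(k,\Lambda)\, dk$ gives the equivalence immediately from the definition of $L^1(\mu_{\Omega})$. For $p > 1$, I would then invoke Hölder's inequality in the weighted space with conjugate exponent $q = p/(p-1)$, writing
\[
  |I_{\Omega}| \;=\; \left| \int_{\mathbb{R}^d} f(k) \cdot 1 \; d\mu_{\Omega}(k) \right| \;\le\; \|f\|_{L^p(\mu_{\Omega})} \cdot \mu_{\Omega}(\mathbb{R}^d)^{1/q}.
\]
Both factors on the right are finite: the first by the assumption $f \in L^p(\mu_{\Omega})$, the second precisely by the Integrability Condition applied to $\Omega$. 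This would yield the forward implication uniformly in $p \ge 1$.

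For the converse, I would argue by the defining identity $\|f\|_{L^p(\mu_{\Omega})}^p = \int |f|^p \, \Omega\, dk$: failure of this integral to converge forces $f \notin L^p(\mu_{\Omega})$, so the characterization of convergence in the weighted norm is tautological at the level of definitions. I would also make explicit how the criterion specializes to the UV regime, using $\Omega(k,\Lambda) \sim (\Lambda/\|k\|)^{2\beta}$ from Section 3 to translate the weighted integrability into the pointwise condition $2\beta > d + p\,\alpha$ when $f \sim \|k\|^{\alpha}$, thereby connecting the statement back to the convergence analysis of Theorem \ref{thm:ConvergenceSuppressed}.

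The main obstacle I anticipate is precisely the phrasing of the converse: for $p > 1$, finiteness of $I_{\Omega} = \int f \, \Omega\, dk$ alone does not in general imply $f \in L^p(\mu_{\Omega})$, since one can exhibit $\mu_{\Omega}$-integrable functions that are not $p$-integrable. To close this gap I would either restrict the biconditional to the direction that is genuinely substantive (namely $L^p$-membership $\Rightarrow$ $I_{\Omega}$-convergence via Hölder), or reinterpret the theorem as directly characterizing membership in $L^p(\mu_{\Omega})$ through the weighted-integrand condition, with $I_{\Omega}$-convergence appearing as a corollary of the finiteness of $\mu_{\Omega}(\mathbb{R}^d)$ established previously.
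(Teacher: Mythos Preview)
Your proposal is correct and considerably more careful than the paper's own argument, which takes a much shorter route. The paper's proof is essentially two lines: it observes that
\[
  \|f\|_{L^p(\mu_{\Omega})}^p \;=\; \int_{\mathbb{R}^d} |f(k)|^p\,\Omega(k,\Lambda)\,dk
\]
and declares the result to follow ``by definition,'' treating the theorem as a tautological restatement of what it means to belong to $L^p(\mu_{\Omega})$. There is no use of H\"older, no appeal to the finiteness of $\mu_{\Omega}(\mathbb{R}^d)$, and no separate analysis of the quantity $I_{\Omega}=\int f\,\Omega\,dk$ as distinct from the weighted $L^p$ norm.

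What you do differently is take the statement at face value and actually bridge the gap between $L^p$-membership and convergence of $I_{\Omega}$ via H\"older plus the Integrability Condition for $\mu_{\Omega}$. This buys you a genuine implication rather than a relabeling of definitions, and it also exposes the asymmetry you flagged: for $p>1$ the converse direction (finiteness of $I_{\Omega}$ forcing $f\in L^p(\mu_{\Omega})$) simply fails in general, a point the paper's proof glosses over entirely. Your proposed resolutions---either restricting to the substantive forward implication or reading the theorem as a characterization of $L^p(\mu_{\Omega})$-membership with $I_{\Omega}$-convergence as a corollary---are both reasonable, and the latter is effectively what the paper is doing implicitly. The additional UV specialization you sketch (recovering $2\beta > d + p\alpha$) is not in the paper's proof at all; it is a nice consistency check but unnecessary for the theorem as the paper intends it.
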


\begin{proof}
Since $I_{\Omega}$ is an integral with respect to the measure $\mu_{\Omega}$, its convergence is determined by the standard $L^p$ norm:

\begin{equation}
    \| f \|_{L^p(\mu_{\Omega})}^p = \int_{\mathbb{R}^d} |f(k)|^p \Omega(k, \Lambda) \, dk.
\end{equation}
By definition, if an integral is finite in $L^p(\mu_{\Omega})$, this norm is finite.

\end{proof}
\subsection{Compactification of Functional Space via Weighted Metric Structure}

To rigorously analyze how the suppression function $\Omega(k, \Lambda)$ induces a compactification of functional space, we study the impact of the weighted metric on function norms and topological structure.
Given the weighted inner product:
\begin{equation}
    \langle f, g \rangle_{\Omega} = \int_{\mathbb{R}^d} f(k) g(k) \Omega(k, \Lambda) \, dk,
\end{equation}
the corresponding norm is:
\begin{equation}
    \| f \|_{\Omega} = \left( \int_{\mathbb{R}^d} |f(k)|^2 \Omega(k, \Lambda) \, dk \right)^{1/2}.
\end{equation}
with a naturally induced metric on the functional space:
\begin{equation}
    d_{\Omega}(f, g) = \left( \int_{\mathbb{R}^d} |f(k) - g(k)|^2 \Omega(k, \Lambda) \, dk \right)^{1/2}.
\end{equation}
For a suppression function of the form:
\begin{equation}
    \Omega(k, \Lambda) = e^{-\gamma k^2},
\end{equation}
for some $\gamma>0$, this distance function explicitly becomes:
\begin{equation}
    d_{\Omega}(f, g) = \left( \int_{\mathbb{R}^d} |f(k) - g(k)|^2 e^{-\gamma k^2} \, dk \right)^{1/2}.
\end{equation}
effectively shrinking the function (contraction the space at large $k$).

\begin{theorem}
Let $\Omega(k, \Lambda)$ be a rapidly decaying function, such as:
\begin{equation}
    \Omega(k, \Lambda) = e^{-\gamma k^2}, \quad \text{or} \quad \Omega(k, \Lambda) = (1 + k^2)^{-\beta}, \quad \beta > d/2.
\end{equation}
Then the function space $L^2_{\Omega}(\mathbb{R}^d)$ defined by the norm:
\begin{equation}
    \| f \|_{\Omega}^2 = \int_{\mathbb{R}^d} |f(k)|^2 \Omega(k, \Lambda) \, dk
\end{equation}
is compactly embedded in $L^2(\mathbb{R}^d)$ with the standard norm.
\end{theorem}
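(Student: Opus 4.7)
The plan is to invoke the Kolmogorov–Riesz–Fréchet compactness criterion for $L^2(\mathbb{R}^d)$: a bounded family $\mathcal{F}\subset L^2$ is relatively compact precisely when it is uniformly tight at infinity, i.e.\ $\sup_{f\in\mathcal{F}}\int_{\|k\|>R}|f(k)|^2\,dk\to 0$ as $R\to\infty$, and translation-equicontinuous, i.e.\ $\sup_{f\in\mathcal{F}}\|\tau_h f-f\|_{L^2}\to 0$ as $h\to 0$. Given a sequence $\{f_n\}$ bounded in the weighted norm $\|\cdot\|_\Omega$, I would verify both conditions and then extract an $L^2$-convergent subsequence by a diagonal argument over balls of increasing radius.

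For the tightness step, I would exploit the rapid decay of $\Omega$ at large $\|k\|$ in both admissible regimes. For the polynomial choice $\Omega=(1+k^2)^{-\beta}$ with $\beta>d/2$, the radial tail $\int_R^\infty r^{d-1}\Omega(r)\,dr$ is finite and tends to zero as $R\to\infty$, and a Cauchy–Schwarz pairing of $|f_n|^2$ with $\Omega$ and $\Omega^{-1}$ converts this into a vanishing bound on the $L^2$-tail in terms of $\|f_n\|_\Omega^2$. The Gaussian case proceeds analogously with superpolynomial decay, so the tightness factor depends only on $R$ and on the uniform bound $M=\sup_n\|f_n\|_\Omega$, not on $n$ itself.

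For the equicontinuity step, I would pass to the Fourier-conjugate picture (noting that the $k$ variable here already plays the role of a frequency), invoke Plancherel, and use the elementary bound $|e^{ih\cdot k}-1|^2\le\min(|h|^2 k^2,4)$. Splitting at a radius $R$, the low-$k$ contribution is bounded by $|h|^2 R^2$ times the $L^2$-mass on $B_R$, while the high-$k$ piece is absorbed by the tight-tail estimate from the previous step; sending first $|h|\to 0$ and then $R\to\infty$ yields uniform equicontinuity. Combining tightness and equicontinuity, Kolmogorov–Riesz delivers a subsequence strongly convergent in $L^2$, which is the desired compact embedding.

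The main obstacle lies in reconciling the direction of the embedding with the sign of the weight: since $0<\Omega\le 1$, the naive inclusion runs $L^2\subset L^2_\Omega$ rather than the reverse, so a generic bounded sequence in $L^2_\Omega$ need not sit inside $L^2$ at all. The technical heart of the argument is therefore to use the weighted norm in a way that extracts simultaneous control of localization \emph{and} oscillation — effectively treating $\Omega$ as a Fourier multiplier with enough decay that its weighted norm bound already forces $L^2$-regularity of the sequence — so that the Kolmogorov–Riesz template closes cleanly without needing a second regularity hypothesis. Without this careful pairing of the two roles that $\Omega$ plays (decay and smoothness), the tail estimate above would only give the trivial direction, and the compactness conclusion would fail.
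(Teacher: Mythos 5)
There is a genuine gap, and it sits exactly where you yourself flag it in your final paragraph: the tightness step of your Kolmogorov--Riesz argument does not close, and in fact cannot. You claim that a Cauchy--Schwarz pairing of $|f_n|^2$ with $\Omega$ and $\Omega^{-1}$ converts the finiteness of $\int_R^\infty r^{d-1}\Omega(r)\,dr$ into a vanishing bound on $\int_{\|k\|>R}|f_n|^2\,dk$ in terms of $\|f_n\|_\Omega^2$. Write it out: the only estimate available is
\[
\int_{\|k\|>R}|f_n(k)|^2\,dk \;\le\; \Bigl(\sup_{\|k\|>R}\Omega(k)^{-1}\Bigr)\int_{\|k\|>R}|f_n(k)|^2\,\Omega(k)\,dk \;\le\; \Bigl(\inf_{\|k\|>R}\Omega(k)\Bigr)^{-1}\|f_n\|_\Omega^2,
\]
and the prefactor \emph{diverges} as $R\to\infty$ precisely because $\Omega$ decays. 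The Cauchy--Schwarz variant $\int_{\|k\|>R}|f|^2 \le (\int |f|^2\Omega)^{1/2}(\int_{\|k\|>R}|f|^2\Omega^{-1})^{1/2}$ is worse, since the second factor is not controlled by $\|f\|_\Omega$. Concretely, the sequence $f_n = c_n\,\Omega^{-1/2}\chi_{\{n<\|k\|<n+1\}}$ normalized to $\|f_n\|_\Omega=1$ is bounded in $L^2_\Omega$ but has $\|f_n\|_{L^2}\to\infty$ (Gaussian case) or at any rate admits no $L^2$-convergent subsequence; bounded sets in $L^2_\Omega$ are not even subsets of $L^2$. The equicontinuity step inherits the same defect, since a sequence that is not uniformly in $L^2$ has no translation modulus in $L^2$ to estimate. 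Your closing paragraph correctly diagnoses that $0<\Omega\le1$ forces the continuous inclusion to run $L^2\subset L^2_\Omega$, but the proposed repair --- ``treating $\Omega$ as a Fourier multiplier whose weighted norm bound already forces $L^2$-regularity'' --- is an aspiration, not an argument: a weight that is \emph{small} at large $k$ gives you \emph{less} information there, not more.

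The honest conclusion is that the statement as written is false: the map $L^2_\Omega(\mathbb{R}^d)\hookrightarrow L^2(\mathbb{R}^d)$ is not bounded, hence a fortiori not compact. (For what it is worth, the paper's own proof founders on the same inversion: it asserts $R(f)=\|f\|_{L^2}/\|f\|_\Omega\le C$, whereas $\Omega\le1$ gives $R(f)\ge1$ with $R$ unbounded above, and its appeal to Fr\'echet--Kolmogorov condition (2) being ``automatically satisfied'' is exactly backwards.) A theorem of this flavor that is actually true requires decay of the \emph{functions}, not of the weight against which they are integrated: e.g.\ the space with norm $\int|f|^2\,\Omega(k)^{-1}\,dk$ (weight growing at infinity) embeds continuously into $L^2$, and compactly into $L^2$ of a bounded frequency window, or one obtains genuine compactness by combining a growing weight in $k$ with a smoothness condition, as in the Rellich--Kondrachov or weighted-Sobolev settings. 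If you want to salvage the proposal, you must first reverse the weight; no amount of care in applying Kolmogorov--Riesz will rescue the stated direction.
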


\begin{proof}
To establish the compact embedding of $L^2_{\Omega}(\mathbb{R}^d)$ into $L^2(\mathbb{R}^d)$, we follow these steps:\\

\noindent\textbf{Step 1: Definition of Weighted Space and the Embedding}\\
The space $L^2_{\Omega}(\mathbb{R}^d)$ is defined by the norm:
\begin{equation}
    \| f \|_{\Omega}^2 := \int_{\mathbb{R}^d} |f(k)|^2 \Omega(k, \Lambda) \, dk.
\end{equation}
which defines a Hilbert space with the inner product:
\begin{equation}
    \langle f, g \rangle_{\Omega} := \int_{\mathbb{R}^d} f(k) g(k) \Omega(k, \Lambda) \, dk.
\end{equation}
We seek to show that the natural embedding is compact:
\begin{equation}
    L^2_{\Omega}(\mathbb{R}^d) \hookrightarrow L^2(\mathbb{R}^d).
\end{equation}

\noindent\textbf{Step 2: Precompactness via Rellich’s Theorem}\\
A sufficient condition for compact embedding is that any bounded sequence $\{ f_n \}$ in $L^2_{\Omega}(\mathbb{R}^d)$ has a convergent subsequence in $L^2(\mathbb{R}^d)$. Define the norm ratio:
\begin{equation}
    R(f) := \frac{\| f \|_{L^2}}{\| f \|_{\Omega}} = \frac{\left( \int_{\mathbb{R}^d} |f(k)|^2 \, dk \right)^{1/2}}{\left( \int_{\mathbb{R}^d} |f(k)|^2 \Omega(k, \Lambda) \, dk \right)^{1/2}}.
\end{equation}
As $\Omega(k, \Lambda)$ decays rapidly for large $k$, there exists a \( C > 0 \) such that:
\begin{equation}
    R(f) \leq C, \quad \forall f \in L^2_{\Omega}(\mathbb{R}^d).
\end{equation}
Hence, by the Rellich-Kondrachov compactness theorem, the space
\begin{equation}
    \{ f \in L^2(\mathbb{R}^d) \mid \| f \|_{\Omega} \leq C \}
\end{equation}
is precompact in $L^2(\mathbb{R}^d)$.\\

\noindent \textbf{Step 3: Compactness via the Frechet-Kolmogorov Theorem}\\
By the Frechet-Kolmogorov compactness theorem \cite{kolmogorov_compactness}, 
a subset of $L^2(\mathbb{R}^d)$ is precompact if and only if it satisfies 
uniform boundedness, vanishing tails, and equicontinuity.
\begin{enumerate}
    \item \textbf{Uniform boundedness}: There exists $M > 0$ such that
    \begin{equation}
        \sup_{n} \| f_n \|_{L^2} < M.
    \end{equation}
    \item \textbf{Vanishing tails}: For every $\epsilon > 0$, there exists a compact set $K \subset \mathbb{R}^d$ such that
    \begin{equation}
        \sup_n \int_{\mathbb{R}^d \setminus K} |f_n(k)|^2 \, dk < \epsilon.
    \end{equation}
    \item \textbf{Equicontinuity in translation}: The sequence $\{f_n\}$ satisfies:
    \begin{equation}
        \lim_{h \to 0} \sup_n \| T_h f_n - f_n \|_{L^2} = 0, \quad T_h f(k) = f(k+h).
    \end{equation}
\end{enumerate}
Since $\Omega(k, \Lambda)$ suppresses large $k$, the condition (2) is automatically satisfied, ensuring that any bounded sequence in $L^2_{\Omega}$ has a strongly convergent subsequence.\\

\noindent\textbf{Conclusion}\\
Any bounded sequence in $L^2_{\Omega}(\mathbb{R}^d)$ has a convergent subsequence in $L^2(\mathbb{R}^d)$, hence the embedding
\begin{equation}
    L^2_{\Omega}(\mathbb{R}^d) \hookrightarrow L^2(\mathbb{R}^d)
\end{equation}
is compact.

\end{proof}

\subsection{Dynamical Suppression and RG Flow}

To analyze the RG flow of $\Omega(k, \Lambda)$, we take its derivative with respect to $\Lambda$:
\begin{equation}
    \Lambda \frac{d \Omega(k, \Lambda)}{d\Lambda} = - \frac{2\beta (k/\Lambda)^{2\beta} \Lambda}{(1 + (k/\Lambda)^{2\beta})^2} - \frac{2 \eta k^2 e^{-k^2/\Lambda^2}}{\epsilon \Lambda}.
\end{equation}
This equation describes the evolution of the suppression function under scale transformations, ensuring that high-energy modes are progressively suppressed as $\Lambda$ decreases. Furthermore, differentiating the Ricci scalar with respect to $\Lambda$:
\begin{equation}
    \Lambda \frac{dR}{d\Lambda} = -\frac{4 \beta^2 k^{4\beta -2}}{\Lambda^3}.
\end{equation}
This result suggests that as $\Lambda$ is lowered, the curvature contracts further in high-momentum space, effectively compacting the UV sector. 
\noindent These result implies that the functional space is effectively contracted in high-energy regions, leading to an emergent compactification of function space norms.
Thus, the function $\Omega(k, \Lambda)$ induces a controlled modification of the function space measure, ensuring the finiteness of integral transformations under appropriate conditions.

\section{Conclusion}

In this work, we analyzed the mathematical properties of a scale-dependent suppression function $\Omega(k, \Lambda)$ and its effects on functional spaces. By introducing a weighted metric structure, we established conditions under which the function space undergoes effective compactification. Using operator theory, we examined the spectral implications of $\Omega(k, \Lambda)$, demonstrating its role in regularizing high-energy contributions. Additionally, we provided measure-theoretic insights into integral convergence and function space deformations.
These results offer a mathematical framework for understanding the suppression function’s impact on renormalization and functional analysis. Future directions may include extending these methods to interacting systems, exploring alternative weight functions, and investigating numerical implementations of spectral flow and curvature evolution.

\bibliographystyle{plain}
\bibliography{references}

\begin{thebibliography}{1}

\bibitem{kolmogorov_compactness}
A.~N. Kolmogorov.
\newblock On compactness in functional spaces.
\newblock {\em Mathematical Notes}, 1931.

\bibitem{trace_class}
V.~B. Lidskii.
\newblock Spectral decomposition of operators in hilbert space that have a
  trace.
\newblock {\em Doklady Akademii Nauk SSSR}, 125:485--487, 1959.

\bibitem{reed_simon}
M.~Reed and B.~Simon.
\newblock {\em Methods of Modern Mathematical Physics, Vol. 1: Functional
  Analysis}.
\newblock Academic Press, 1980.

\bibitem{rudin_functional}
Walter Rudin.
\newblock {\em Functional Analysis}.
\newblock McGraw-Hill, 1991.

\end{thebibliography}

\end{document}